\def\MIS{MIS}
\def\spec{\infty}
\def\Amax{AMAX}
\def\Bmax{BMAX}
\def\Cmax{CMAX}
\newcommand{\remove}[1]{}
\newcommand{\Active}{{{\sf Influenced}}}
\newcommand{\A}[1]{\Active[#1]}
\newcommand{\N}{{\mathbb{N}}}
\renewcommand{\tau}{\rho}
\def\jumpback{}
\newtheorem{theorem}{Theorem}
\newtheorem{lemma}{Lemma}
\newtheorem{definition}{Definition}
\newtheorem{remark}{Remark}
\begin{document}
\title{How to go Viral: Cheaply and Quickly}

\author[1]{F. Cicalese}

\author[2]{G. Cordasco}

\author[1]{L. Gargano}

\author[3]{M. Milani\v{c}}

\author[4]{J. Peters}

\author[1]{U. Vaccaro}

\affil[1]{Dept.~ of Computer Science, University of Salerno, Italy, {\texttt{\{cicalese,lg,uv\}@dia.unisa.it}}}
\affil[2]{Dept.~of Psychology, Second University of Naples, Italy, {\texttt {gennaro.cordasco@unina2.it}}}
\affil[3]{University of Primorska, UP IAM and UP FAMNIT, 
SI 6000 Koper, Slovenia,  \texttt{martin.milanic@upr.si}}
\affil[4]	{School of Computing Science, Simon Fraser University, Canada,   \texttt{peters@cs.sfu.ca}}


\maketitle              

\begin{abstract}
Given a social network represented by a graph $G$, 
we consider the problem of finding a bounded cardinality 
set of nodes $S$ with   the property  that  the influence spreading from  $S$ 
in $G$ is as large as possible. 
The dynamics that govern the spread of influence is the following:
 initially only  elements in $S$ are influenced;
subsequently at each round, the set of influenced  elements is
augmented by all nodes in the network that have a sufficiently large number of
already influenced  neighbors.
While it is known that the general problem is hard to solve --- even  in the approximate sense ---
we present exact polynomial time algorithms for trees, paths, cycles,
and complete graphs. \jumpback
\end{abstract}

\section{The Motivations}
Gaming giant FONY\textsuperscript{\textregistered} 
is about to launch its brand new  console  PlayForFUN-7\textsuperscript{\textregistered},
and intends  to maximize  the adoption of the  new product through 
a massive viral marketing  campaign,   
exploiting the human tendency to conform \cite{Asch56}.

This tendency occurs for three reasons: 
a) the basic human need to be liked and accepted  by others \cite{Baum+}; 
b) the belief that others, especially a majority group, 
have more accurate and trustworthy information than the individual \cite{S}; 
c) the ``direct-benefit'' effect,  implying that
an individual obtains an  explicit benefit when 
he/she aligns his/her behavior with the behavior of others (e.g., \cite{EK}, Ch.~17). 


In the case in point, argument c)  is supported by the fact that  
each player  who buys the PlayForFUN-7 console 
will be able to play online with all of the people 
who already have bought  the same console.
Indeed, the (possible) success of an on-line gaming service comes from its large number of users;
if this service had no members, there would be no point to anyone signing up for it. 
But as people begin using the service, the benefit for more people to sign up 
increases due to the increasing opportunities to play games with others online.
This motivates more people to sign up for the service which further increases the benefit.



FONY  is also aware  that the  much-feared  competitor Nanosoft\textsuperscript{\textregistered}  
will soon start to flood the market with  a very similar product: FUNBox-14. 
For this reason, it is crucial  to quickly spread the awareness of the new console PlayForFUN-7
to the  whole market of potential customers.

The CEO of FONY  enthusiastically embraced the idea of a viral marketing  
campaign\footnote{``If politicians can sell their stuff through a viral marketing campaign 
\cite{Bo+,LKLG,T}, then why not us?'', an unconfirmed source claims the CEO said.},
and instructed the FONY Marketing Division to plan a viral marketing campaign with the following requirements:
1) an initial set of influential people should be targeted and  receive a complimentary personalized
PlayForFUN-7 station (because of budget restrictions, this set is required to be \emph{small});
2) the group of influential people must be judiciously chosen so as to \emph{maximize} the spread of 
influence within the set of potential PlayForFUN-7 buyers;
3) the spread of influence must happen \emph{quickly}.

To comply with the CEO \emph{desiderata},  FONY Marketing Division analyzed
the behavior of  players  in the network  during the past few years 
(i.e., when players bought the latest console, 
how many games they bought, how many links/friends they have in 
the network, and how long they play on average every week).
On the basis of this analysis, an  estimate of each player's  tendency to conform
was made, and the following mathematical model was put forward.
The network of players is represented by a graph $G = (V,E)$,
where $V$ is the set of players, and there is an edge between two players
if those two players are friends in the network.
The individual's tendency to conform is quantified  by a function 
$t: V \longrightarrow \N = \{0,1,2,\ldots\}$,
with easy-to-convince players having ``low'' $t(\cdot)$ values, and
hard-to-convince players having ``high'' $t(\cdot)$ values.
If $S\subseteq V$ is any initial set of targeted people (\emph{target set}), then 
an {\em influence spreading process in $G$}, starting at $S$,
is a sequence of node  subsets 
$\A{S,0} \subseteq \A{S,1} \subseteq \ldots
\subseteq \A{S,\tau} \subseteq \ldots \subseteq V,$ such that
\begin{eqnarray*}
\A{S,0} &=& S \\ \mbox{and  for all }\tau > 0,& &\\
\A{S,\tau} &\!\!\!=\!\!\!& \A{S,\tau{-}1}\cup \Big\{u : \big|N(u){\cap} \A{S,\tau{-}1}\big|\ge t(u)\Big\},
\end{eqnarray*}
where $N(u)$ is the set of neighbors of $u$. 
In  words, an individual $v$ becomes influenced 
if  the  number of his influenced friends is at least 
its threshold $t(v)$.
It will be said  that $v$ is influenced {\em within} round $\tau$ if $v \in  \A{S,\tau}$;
 $v$ is influenced {\em at} round $\tau>0$ if $v \in  \A{S,\tau}\setminus \A{S,\tau - 1}$.
 
Using this terminology and notation, we can formally state the original problem as:

\medskip

\noindent
{\sc $(\lambda, \beta)$-Maximally Influencing Set ($(\lambda, \beta)$-MIS)}.\\
{\bf Instance:} A graph $G=(V,E)$, thresholds $t:V\longrightarrow \mathbb{N}$, 
a latency bound $\lambda\in \N$ and a budget $\beta \in \N$.\\
{\bf Question:} Find a set $S\subseteq V$  such that $|S|\leq \beta$ and $|\A{S,\lambda}|$ 
is as large as possible.

\medskip

\remove{
Notice that in the above problem we may assume without loss of generality
that $0\leq t(u)\leq d(u)+1$ holds for all nodes $u\in V$ (otherwise, we can set
$t(u)=d(u)+1$ for every node $u$ with threshold exceeding its degree plus one without
changing the problem).
}

\section{The Context}

It did not spoil the fun(!)\ of FONY Marketing Division to learn that 
(variants of) the $(\lambda, \beta)$-MIS problem have already been studied in
the scientific literature. 
We shall limit ourselves here
to discussing the work that is most directly related to ours, and refer the reader to the 
 monographs \cite{CLC,EK} for an excellent overview of the area.
We just mention that our results also seem to be  relevant to other  areas,
like dynamic monopolies \cite{FKRRS-2003,Peleg-02} for instance.

The first authors to study the spread of influence in networks
from an algorithmic point of view were Kempe \emph{et al.} \cite{KKT-03,KKT-05}.
However, they were mostly interested in networks with  randomly chosen thresholds.
Chen \cite{Chen-09} studied the following minimization problem:
given a graph $G$ and fixed thresholds $t(v)$, find
a  set of minimum size that eventually influences
all (or a fixed fraction of) nodes of $G$.
He proved  a  strong inapproximability result that makes unlikely the existence
of an  algorithm with  approximation factor better than  $O(2^{\log^{1-\epsilon }|V|})$.
Chen's result stimulated a series of papers  \cite{ABW-10,BHLM-11,BCNS,Centeno12,Chiang,Chopin-12,Chun,Chun2,C-OFKR,Ga+,Re,Za}, that isolated interesting cases 
in which the problem (and variants thereof) becomes tractable.

None of these papers considered 
the \emph{number of rounds} necessary for the spread of influence in the network. 
However, this is a relevant question for viral marketing in which
it is quite important to spread information quickly.
Indeed, research in Behavioural Economics 
shows that humans make decisions mostly on the basis of very recent events, 
even though they might remember much more \cite{Alba,Chen+}.
The only paper known to us that has studied the spread of influence in the same diffusion model
that we consider here, and with constraints on the number of rounds 
in which the process must be completed, 
 is \cite{CCGMV13}. How our results are related to \cite{CCGMV13}
 will be elucidated   in the next section. 
\remove{
It is equally important, before embarking on a possible
onerous investment,
to try  estimating the maximum amount of influence spread that can
be guaranteed within a certain amount of time (i.e, for
some $\lambda$ fixed in advance),
rather than simply knowing that eventually (but maybe too late)
the whole market might be covered. These considerations motivate our first generalization of
the problem, parametrized by the number of rounds $\lambda.$
The practical relevance of parameterizing the problem also
with  bounds on the initial budget or the final requirement should
be equally evident.}%
Finally, we
 point out that Chen's  \cite{Chen-09} inapproximability result
still holds for general graphs if the diffusion  process must end in a
bounded number of rounds.

\section{The Results}
Our main results are polynomial time algorithms to 
solve the $(\lambda,\beta)$-MIS problem on Trees, Paths, 
Cycles, and Complete
graphs, improving and extending some results from \cite{CCGMV13}.
In particular, the paper \cite{CCGMV13} put forward
an algorithmic framework to solve the $(\lambda,\beta)$-MIS problem (and related
ones), in graphs of bounded clique-width. When instantiated on trees,
the approach of \cite{CCGMV13} would give algorithms for the $(\lambda,\beta)$-MIS problem
with complexity that is
\emph{exponential} in the parameter $\lambda$, whereas our algorithm 
has complexity polynomial in all the relevant parameters (cf., Theorem \ref{theorem-tree}).
We should also remark that, in the very special  case $\lambda=1$ and thresholds $t(v)=1$, for each $v\in V$,
problems of influence diffusion reduce to well known domination problems in graphs (and variants thereof).
In particular, when $\lambda=1$
 and $t(v)=1$, for each $v\in V$,
our $(\lambda,\beta)$-\textsc{Maximally Influencing Set} problem reduces to the \textsc{Maximum Coverage}
problem considered in \cite{BGHHJK}.
Therefore, our results can also be seen as far-reaching generalizations of \cite{BGHHJK}.

\medskip

\section{$(\lambda,\beta)$-Maximally Influencing Set  on Trees}	\label{sec-trees}
In this section, we give an algorithm for the {\sc $(\lambda, \beta)$-Maximally Influencing Set} problem on trees.
Let $T = (V,E)$ be a tree, rooted at some node $r$.
Once such a rooting is fixed, for any node $v$, we  denote by $T(v)$ the  subtree rooted at $v$.
We will develop a dynamic programming algorithm 
that will
 prove the following theorem.
\begin{theorem}\label{theorem-tree}
The {\sc $(\lambda, \beta)$-Maximally Influencing Set} problem can be 
solved in  time \\$O(\min\{n\Delta^2\lambda^2\beta^3,$ $n^2\lambda^2\beta^3\})$ on  a tree with $n$ nodes and maximum degree $\Delta$.
\end{theorem}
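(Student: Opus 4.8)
The plan is to design a dynamic programming algorithm that works bottom-up over the rooted tree $T$, computing for each node $v$ a table indexed by the quantities that a subproblem on $T(v)$ must ``remember'' in order to be combined consistently with the rest of the tree. The natural state to track at a node $v$ is: (i) how many target-set elements (out of the budget $\beta$) are allocated inside $T(v)$; (ii) the round $\tau \le \lambda$ at which $v$ itself becomes influenced (or $\infty=\spec$ if it never does within $\lambda$ rounds); and (iii) the round at which $v$'s \emph{parent} is influenced, since the parent can contribute at most one to $v$'s count of influenced neighbors and this timing affects whether and when $v$ crosses its threshold $t(v)$. For each such state we store the maximum possible value of $|\A{S,\lambda}\cap V(T(v))|$, i.e.\ the largest number of influenced nodes achievable in the subtree. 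Thus the table at $v$ has dimensions roughly $O(\beta\cdot\lambda\cdot\lambda)$, and the answer to the whole problem is read off from the root's table by maximizing over all states with at most $\beta$ targets and any influence-time for $r$ (the root has no parent, so that coordinate is vacuous).

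The first step I would carry out is the formal definition of the DP table and a careful statement of its semantics, including the boundary convention that a node assigned latency $\spec$ simply is not counted. Next I would handle the leaves as base cases: a leaf $v$ is either placed in $S$ (costing one unit of budget, influenced at round $0$), or influenced at round $\tau\ge 1$ precisely when its parent is influenced by round $\tau-1$ and $t(v)\le 1$, or never influenced. Then comes the combination step: to compute $v$'s table from the tables of its children $c_1,\dots,c_k$, I would process the children one at a time, maintaining a partial table that records, after incorporating children $c_1,\dots,c_i$, the budget used so far and the number of those children that are influenced by each round $\le \lambda-1$ (equivalently, the earliest round by which $v$ has accumulated $j$ influenced children-neighbors, for $j=1,\dots,t(v)$). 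Combining two partial tables is a convolution over the budget coordinate and a ``merge'' over the arrival-round coordinate; once all children are folded in, together with the at-most-one contribution of the parent (whose influence time is the third coordinate of $v$'s own state), one reads off whether $v$ reaches $t(v)$ influenced neighbors in time for each candidate value of $v$'s latency, and adds $1$ to the count exactly when $v$'s latency is finite. Each such child-merge costs $O(\lambda\cdot\beta^2)$ or so (budget convolution times the round bookkeeping), and summing the $O(\Delta)$ merges per node over all nodes, together with the extra $O(\lambda)$ factor from the parent-time coordinate and the $O(\beta)$ from $v$'s own budget bookkeeping, yields the claimed bound; the alternative $n^2$ factor in place of $n\Delta^2$ comes from the trivial bound $\sum_v d(v)^2 = O(n\cdot\max_v d(v))\le O(n^2)$ but more sharply from $\sum_v d(v) = O(n)$ handled via a balanced merging / amortization of the child convolutions.

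The main obstacle I expect is getting the dependency structure between a node's threshold crossing and the \emph{timing} exactly right, without circularity: $v$'s influence round depends on its children's and parent's influence rounds, a child's influence round depends on $v$'s, and so on. The key observation that breaks the circularity is that in a tree the only neighbor of $v$ outside $T(v)$ is its parent, so once we fix (as part of $v$'s state) the round at which the parent is influenced, all the remaining dependencies point strictly downward into $T(v)$ and can be resolved by the bottom-up DP; correctness then follows by induction on subtree height, showing that every state with a finite stored value corresponds to an actual choice of targets in $T(v)$ realizing that influence count under the assumed parent-timing, and conversely. A secondary technical point to be careful about is monotonicity: because $\A{S,0}\subseteq\A{S,1}\subseteq\cdots$, ``influenced by round $\tau$'' behaves monotonically in $\tau$, which is what makes the round-coordinate merges well defined and lets us record only the \emph{earliest} arrival round for each prefix of children rather than full timing vectors. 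I would also verify the running-time accounting explicitly, since the stated complexity $O(\min\{n\Delta^2\lambda^2\beta^3, n^2\lambda^2\beta^3\})$ hinges on bounding the total cost of all child-merges, and a naive analysis could lose a factor of $\Delta$ or $\beta$ if the convolutions are not organized carefully.
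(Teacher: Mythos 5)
Your overall architecture is the same as the paper's: a bottom-up dynamic program whose state at a node $v$ records the budget spent inside $T(v)$, the round at which $v$ itself becomes influenced (or never, within $\lambda$ rounds), and information about the parent's possible contribution, with the children of $v$ folded in one at a time via a convolution over the budget. The only real encoding difference is your third coordinate: you keep the parent's activation round explicitly, whereas the paper compresses it into evaluating $v$ at either its threshold $t(v)$ or the residual threshold $t(v)-1$, together with the side constraint that a child may be charged the residual threshold only if its own activation round is strictly later than that of $v$. These are equivalent ways of breaking the up/down circularity you mention.

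The step that does not work as written is the child-merge bookkeeping. A partial table indexed by ``the number of children influenced by each round $\le \lambda-1$'' has on the order of $(d+1)^{\lambda}$ states, and your alternative (the earliest round by which $j$ children are active, for every $j=1,\dots,t(v)$) is a $t(v)$-tuple of rounds; neither supports the $O(\lambda\beta^2)$ per-merge cost you charge, nor any polynomial bound, if stored literally, and you cannot recover the timing information in a final read-off ``for each candidate latency'' from a table that does not carry it. The fix, which is what the paper does, is to carry the candidate activation round $\tau$ of $v$ as an explicit index of the merge (or fix it before folding in the children) and to keep, besides the budget, only a single counter $k \le t(v)$ of children already active by round $\tau-1$ (no counter is needed for the cases where $v$ is in the target set or is never influenced). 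This capped counter is also where the stated complexity really comes from: each merge then costs $O(t(v)\lambda\beta^2)$ rather than $O(\lambda\beta^2)$, giving $O(d(v)^2\lambda\beta^2)$ per table entry and $O(d(v)^2\lambda^2\beta^3)$ per node, and the theorem follows from $\sum_v d(v)^2 = O(\min\{n\Delta^2, n^2\})$. Your suggestion that $\sum_v d(v) = O(n)$ could be exploited by balanced merging to drop a factor of $\Delta$ does not apply, because the counter coordinate genuinely ranges up to $t(v) \le d(v)+1$; fortunately the bound you are asked to prove already contains the $\Delta^2$ factor, so once the merge is organized as above your plan matches the paper's proof.
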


 The rest of this section is devoted to the description and analysis of the
  algorithm that proves  Theorem \ref{theorem-tree}.
The algorithm   traverses the input  tree $T$  bottom up, in such a way that each node is considered 
after all its children have been processed. For each node $v$, the algorithm solves all possible   
{$(\lambda, b)$-MIS} problems on the subtree $T(v)$, for $b=0,1,\ldots, \beta$.  Moreover, in order to compute these values 
we will have to  consider not only the original threshold $t(v)$ of $v$, but also 
the decreased value $t(v)-1$ which we call the {\em residual threshold}.
In the following,  we assume without loss of generality that $0\leq t(u)\leq d(u)+1$ 
(where $d(u)$ denotes the degree of $u$) holds for all nodes $u\in V$ (otherwise, we can set
$t(u)=d(u)+1$ for every node $u$ with threshold exceeding its degree plus one without
changing the problem).
\begin{definition} 
For each node $v\in V$,  integers $b\geq 0$,   
$t\in \{t(v)-1,t(v) \}$, and $\tau \in\{0,1,\ldots,\lambda\}\cup \{\spec\}$, let us
denote by $\MIS[v,b,\tau,t]$   the maximum number of nodes that can be influenced in $T(v)$, within round $\lambda$, assuming that
\begin{itemize}
\item at most $b$ nodes among those in $T(v)$ belong to the target set;
\item the threshold of $v$ is $t$;
\item the parameter $\tau$ is such that
\begin{align}
 & 1) \mbox{ if $\tau=0$ then $v$ must belong to the target set,}\label{eq-case1}\\
 & 2) \mbox{ if $1\leq\tau\leq \lambda$ then $v$ is not in the target set and at least $t$ of its children are active  }\nonumber\\
     & \quad        \mbox{  within round $\tau-1$,}\label{eq-case2}\\
 & 3) \mbox{ if $\tau=\infty$  then $v$ is not influenced within round $\lambda$.}\label{eq-case3} 
\end{align}
\end{itemize}

We define $\MIS[v,b,\tau,t]= -\infty$ when 
any of the above constraints is not satisfiable. 
For instance, if $b=\tau=0$ we have\footnote{Since  $\tau=0$ then $v$ should belong to the target set, but 
this is not possible because the  budget is $0$.} $\MIS[v,0,0,t]= -\infty$. 

Denote  by $S(v,b,\tau,t)$ any target set attaining the value $\MIS[v,b,\tau,t]$.
\end{definition}
We notice that in the above definition  
 if $1\leq \tau\leq \lambda$ then,    the assumption that $v$ has threshold $t$ implies that 
 $v$ is influenced within round $\tau$
and is   able to influence its neighbors starting from round $\tau +1$.
The value
$\tau=\spec$ means that no condition are imposed on  
$v$: It could be influenced after round $\lambda$ or not influenced at all.
 In the sequel, $\tau=\spec$ will be used to 
ensure that $v$ will not contribute to the influence any neighbor (within  round $\lambda$). 
\begin{remark}
It is worthwhile mentioning that  $\MIS[v,b,\tau,t]$ is monotonically non-decreasing in $b$  and non-increasing in $t$.
However, $\MIS[v,b,\tau,t]$ is not necessarily  monotonic in $\tau$.
\end{remark}
The maximum number of nodes in $G$ that can be influenced within round $\lambda$ with
any (initial) target set of cardinality at most $\beta$
can be then obtained by computing
\begin{equation}\label{eq-mas}
\max_{\tau \in \{0,1,\ldots,\lambda,\spec\}} \ \MIS[r,\beta,\tau,t(r)].
\end{equation}
In order to obtain the value in (\ref{eq-mas}), we compute $\MIS[v,b,\tau,t]$ for each $v \in V,$  for each $b=0,1,\ldots,\beta$, for each $\tau\in\{0,1,\ldots, \lambda, \spec\}$,  and for $t \in \{t(v)-1,t(v)\}$.

%

We  proceed in a  bottom-up fashion on the tree, so that the computation of the various values $\MIS[v,b,\tau,t]$ for a node $v$ is done after all 
the values for $v$'s children are known.

For each leaf node $\ell$ we have
\begin{equation}\label{eq-casel}
\MIS[\ell,b,\tau,t] =  \begin{cases} 1 & \mbox{ if } (\tau=0 \mbox{ AND } b\geq1) \mbox{ OR } (t=0 \mbox{ AND }1\leq \tau \leq \lambda ) 
\\
0 & \mbox{ if } \tau=\infty
\\
-\infty &  \mbox{otherwise.} \end{cases} 
\end{equation}
Indeed, a leaf $\ell$ gets influenced, in the single node subtree $T(\ell)$, only when either   $\ell$ belongs to the target set ($\tau=0$) and the budget is  sufficiently large ($b\geq1$)  or  the threshold is zero (either $t=t(\ell)=0$ or $t=t(\ell)-1=0$) independently of the number of rounds.

For an internal node $v$, we show how to compute each  value $\MIS[v,b,\tau,t]$   in time $O(d(v)^2\lambda\beta^2)$.

We recall that when computing  a value $\MIS[v,b,\tau,t]$, we already have computed all the $\MIS[v_i,*,*,*]$ values for each child $v_i$ of $v$.\\
We  distinguish three cases  for the computation of $\MIS[v,b,\tau,t]$ 
 according to the value of $\tau$.

\medskip

\noindent \textbf{CASE 1:} $\tau=0$.
In this case we assume that $b\geq 1$ (otherwise $\MIS[v,0,0,t]=-\infty$). Moreover,   we know that $v\in S(v,b,0,t)$ hence the computation of 
$\MIS[v,b,0,t]$
must  consider all the possible ways in which the remaining budget $b-1$ can be partitioned among $v$'s children.
\begin{lemma}\label{lemma1}
It is possible to compute $\MIS[v,b,0,t]$, where $b\geq 1$,  in time $O(d \lambda b^2),$ where $d$ is the number of children of $v$.
\end{lemma}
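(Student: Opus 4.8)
The plan is to set up a secondary dynamic program over the children of $v$, processing them one at a time and keeping track of how much budget has been spent so far and in which round $v$ becomes influenced. Since $\tau = 0$, node $v$ is itself in the target set, so it is influenced within round $0$ and starts influencing its neighbors from round $1$ on. Hence each child $v_i$, when we decide its contribution, may assume it already has one influenced neighbor (namely $v$) available from round $1$; concretely, for a child $v_i$ we will look up values $\MIS[v_i, b_i, \tau_i, t]$ where $t$ ranges over $\{t(v_i)-1, t(v_i)\}$ — using the residual threshold $t(v_i)-1$ exactly when we want to account for the free push coming from $v$ — and where $\tau_i \in \{0,1,\dots,\lambda,\spec\}$ records when $v_i$ itself becomes influenced inside $T(v_i)$.

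First I would define an auxiliary table $A[i, b']$ = the maximum number of influenced nodes in the forest $T(v_1) \cup \dots \cup T(v_i)$ achievable using total budget $b'$ on these first $i$ subtrees, subject to the constraint imposed by $\tau=0$ (i.e.\ $v$ is targeted, so no child is required to help influence $v$, but each child may be pushed one step earlier by $v$). The recurrence is $A[i,b'] = \max_{0 \le b_i \le b'} \big( A[i-1, b'-b_i] + f_i(b_i) \big)$, where $f_i(b_i)$ is the best contribution of the single subtree $T(v_i)$ under budget $b_i$, namely $f_i(b_i) = \max_{\tau_i} \MIS[v_i, b_i, \tau_i, \cdot]$ with the threshold argument of $v_i$ chosen as $t(v_i)-1$ whenever $\tau_i \ge 1$ (so that $v$'s contribution from round $1$ is what lets $v_i$ cross its threshold) and as $t(v_i)$ when $\tau_i \in \{0,\spec\}$; one must also be careful that a $\tau_i=1$ child is legitimate precisely because $v$ is active within round $0$. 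Initialize with $A[0, b'] = 0$ for all $b'$, and at the end set $\MIS[v,b,0,t] = 1 + A[d, b-1]$, the extra $1$ counting $v$ itself.

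The second step is the complexity bookkeeping. Each quantity $f_i(b_i)$ is a maximum over the $O(\lambda)$ choices of $\tau_i$ (the values $0,1,\dots,\lambda,\spec$), so computing $f_i(\cdot)$ for all $b_i \le b$ costs $O(\lambda b)$; there are $d$ children, giving $O(d\lambda b)$ for all the $f_i$. Filling the table $A[i,b']$ for $i=1,\dots,d$ and $b' = 0,\dots,b-1$ takes $O(b)$ per cell for the inner maximization over the split $b_i$, hence $O(d b^2)$ overall, and the final combination is $O(1)$. The total is $O(d\lambda b + d b^2) = O(d\lambda b^2)$, as claimed (absorbing the smaller term).

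The main obstacle I anticipate is getting the threshold/round interaction for the children exactly right: one must argue that it is never advantageous for a child $v_i$ to be ``helped'' by $v$ in a round later than $1$ — $v$ is active from round $0$, so if $v_i$ is to benefit from $v$ at all it benefits from round $1$ onward, and using the residual threshold $t(v_i)-1$ with the child's own parameter $\tau_i$ correctly encodes ``$v_i$ gets one push from $v$ plus whatever its own subtree provides.'' I would also need to check that the $\spec$ case for a child is handled consistently (a child we deliberately leave uninfluenced-by-$v$, e.g.\ because pushing it is wasteful, still has its internal count recorded via $\MIS[v_i,b_i,\spec,t(v_i)]$), and that an optimal solution really does decompose across the subtrees this way — i.e.\ that influence cannot flow between two different child-subtrees except through $v$, which is immediate since $T$ is a tree. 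Establishing this decomposition rigorously, and the exchange argument that an optimal target set can be assumed to be partitioned among the $T(v_i)$ with the budgets summing to $b-1$, is the only genuinely delicate point; the rest is routine.
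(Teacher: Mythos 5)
Your proposal is correct and follows essentially the same route as the paper: a knapsack-style dynamic program over the children of $v$ (the paper's $\Amax_v[i,j]$ table), distributing the budget $b-1$ among the subtrees, looking up each child's $\MIS[v_i,\cdot,\tau_i,t_i]$ value, and adding $1$ for $v$, with the same final combination $1+\Amax_v[d,b-1]$ and the same $O(d\lambda b^2)$ accounting. The only deviations are cosmetic: you fix the child's threshold deterministically as a function of $\tau_i$ (residual for $1\le\tau_i\le\lambda$, original for $\tau_i\in\{0,\spec\}$) instead of maximizing over both thresholds under the paper's constraint, which is equivalent by the monotonicity of $\MIS$ in $t$, and your precomputation of the per-child maxima even yields the slightly sharper bound $O(d\lambda b + d b^2)$, which is within the claimed bound.
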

\begin{proof}
Fix an ordering $v_1,v_2,\ldots, v_d$ of the children of node $v$.\\
For $i=1,\ldots, d$ and $j=0,\ldots, b-1$, let $\Amax_v[i,j]$ be the maximum number of nodes that can be influenced, within $\lambda$ rounds, in $T(v_1),T(v_2),\ldots,T(v_i)$ assuming that the target set contains  $v$ and at most $j$ nodes  among those in $T(v_1),T(v_2),\ldots,T(v_i)$.

By (\ref{eq-case1}) we have 
\begin{equation}\label{eq-amax}
\MIS[v,b,0,t]=1+\Amax_v[d,b-1].
\end{equation}
We now show how to compute $\Amax_v[d,b-1]$ by recursively computing the values $\Amax_v[i,j]$, for each $i=1,2,\ldots,d$ and  $j=0,1,\ldots,b-1$.

For $i=1$, we  assign all of the budget to $T(v_1)$ and 
$$\Amax_v[1,j]= 
\max_{\tau_1,t_1} \{\MIS[v_1,j,\tau_1,t_1]\},$$
where 
 $\tau_1 \in \{0,\ldots,\lambda,\spec\}$,
\quad  $t_1\in \{ t(v_1), t(v_1)-1\}$,  \quad and 
	\quad  if $t_1=t(v_1)-1$ then $  \tau_1 \geq  1.$

For $i>1$, we consider all possible ways of partitioning the budget  $j$ into two values $a$ and $j-a$, for each $0\leq a \leq j$. The budget $a$ is assigned to the first $i-1$ subtrees, while the  budget $j-a$ is assigned to $T(v_i)$. Hence,  
$$   \Amax_v[i,j]=\max_{0\leq a \leq j} \left \{ \Amax_v[i-1,a] + \max_{\tau_i,t_i} \{\MIS[v_i,j-a,\tau_i,t_i]\} \right \}$$ 
where 
$\tau_i \in \{0,\ldots,\lambda,\spec\},$
\quad  $t_i\in \{ t(v_i), t(v_i)-1\}$, \quad and
	\quad   if $t_i=t(v_i)-1$ then $  \tau_i \geq  1$.

The computation of  $\Amax_v$ comprises  $O(d  b)$ values and each one is computed recursively in time $O(\lambda b)$.  Hence 
we are able to compute it, and 
by (\ref{eq-amax}),  also $\MIS[v,b,0,t]$, in time $O(d \lambda b^2)$.
\end{proof}

\medskip

\noindent \textbf{CASE 2:} $1\leq\tau\leq\lambda$.
In this case     $v$ is not in the target set and  at round $\tau-1$ at least $t$ of its children  must be influenced.
The computation of a value $\MIS[v,b,\tau,t]$ must  consider all the possible ways in which the  budget $b$ can be partitioned among $v$'s children in such a way that at least $t$ of them are influenced within round $\tau-1$.

\begin{lemma}
For each $\tau=1,\ldots, \lambda$, it is possible to
 compute $\MIS[v,b,\tau,t]$ recursively in time $O(d^2\lambda b^2),$ where $d$ is the number of children of $v$.
\end{lemma}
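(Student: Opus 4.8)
The plan is to follow the same scheme as in the proof of Lemma~\ref{lemma1}: process the children $v_1,\dots,v_d$ of $v$ one at a time, deciding how to split the budget among their subtrees, but now the auxiliary table must also keep track of how many of the children processed so far are influenced within round $\tau-1$, since by~(\ref{eq-case2}) at least $t$ of them have to be. The crucial structural remark is that, in this case, $v$ is not in the target set and is influenced only at round $\tau$, so it can contribute to a child $v_i$ only starting from round $\tau+1$. Consequently the admissible pairs $(\tau_i,t_i)$ for a child $v_i$ split into two groups. If $v_i$ is to be one of the $\geq t$ children that make $v$ cross its threshold, then $v_i$ must be influenced within round $\tau-1$ \emph{without} any help from $v$; hence $\tau_i\in\{0,\dots,\tau-1\}$ and $v_i$ must use its full threshold $t_i=t(v_i)$. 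Otherwise $v_i$ is not influenced within round $\tau-1$, so $\tau_i\in\{\tau,\dots,\lambda,\spec\}$, and here the residual threshold $t_i=t(v_i)-1$ is admissible only when $\tau_i\geq \tau+1$, i.e.\ when $v$ (influenced within round $\tau$) is already influenced by round $\tau_i-1$.

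Fix the ordering $v_1,\dots,v_d$. For a child $v_i$ and a budget $c\in\{0,\dots,b\}$, put $A_i(c)=\max_{0\leq\tau_i\leq\tau-1}\MIS[v_i,c,\tau_i,t(v_i)]$ (best value of $T(v_i)$ when $v_i$ \emph{is} required to be influenced within round $\tau-1$), and let $B_i(c)$ be the maximum of $\MIS[v_i,c,\tau_i,t_i]$ over the admissible pairs with $v_i$ \emph{not} influenced within round $\tau-1$, namely over $\tau_i\in\{\tau,\dots,\lambda,\spec\}$ with $t_i=t(v_i)$, together with $\tau_i\in\{\tau+1,\dots,\lambda\}$ with $t_i=t(v_i)-1$. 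Now for $i=1,\dots,d$, $j\in\{0,\dots,b\}$ and $k\in\{0,\dots,i\}$, define $\Bmax_v[i,j,k]$ as the maximum number of nodes that can be influenced within round $\lambda$ in $T(v_1)\cup\dots\cup T(v_i)$, assuming $v$ is not targeted, at most $j$ nodes of these subtrees are targeted, and \emph{exactly} $k$ of $v_1,\dots,v_i$ are influenced within round $\tau-1$ (through admissible choices); set it to $-\infty$ if $k>i$ or the constraints are infeasible. With $\Bmax_v[0,j,0]=0$ and $\Bmax_v[0,j,k]=-\infty$ for $k\geq 1$, the recurrence for $i\geq 1$ is
\[
\Bmax_v[i,j,k]=\max_{0\leq a\leq j}\max\bigl\{\,\Bmax_v[i-1,a,k-1]+A_i(j-a),\ \ \Bmax_v[i-1,a,k]+B_i(j-a)\,\bigr\},
\]
the first alternative treating $v_i$ as ``counted'', the second as ``not counted''. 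Finally
\[
\MIS[v,b,\tau,t]=1+\max_{t\leq k\leq d}\Bmax_v[d,b,k],
\]
the extra $1$ being $v$ itself, which is influenced within round $\tau$ exactly when at least $t$ of its children are influenced within round $\tau-1$.

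Correctness is by the same exchange argument used in Lemma~\ref{lemma1}: the subtrees $T(v_1),\dots,T(v_d)$ interact with the rest of $T(v)$ only through $v$, so an optimal configuration for $\MIS[v,b,\tau,t]$ restricts on each $T(v_i)$ to a configuration witnessing an admissible value $\MIS[v_i,\cdot,\tau_i,t_i]$, with $v_i$ ``counted'' iff $\tau_i\leq\tau-1$; these witnesses recombine, so the recurrence does not overshoot, and conversely the configuration it assembles is realizable in $T(v)$ with $v$ influenced within round $\tau$. For the running time: all the $O(db)$ values $A_i(c)$ and $B_i(c)$ are read off from the already-computed tables of the children in $O(d\lambda b)$ time total; the table $\Bmax_v$ has $O(d^2 b)$ entries, each computed in $O(b)$ time by ranging over the split $a$; and the final maximization costs $O(d)$. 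This is $O(d^2 b^2 + d\lambda b^2)$, in particular $O(d^2\lambda b^2)$.

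The one genuinely delicate point — the main obstacle to get right — is exactly the asymmetry described above: a child on which we rely to trigger $v$ must reach activation on its own by round $\tau-1$ using its full threshold, whereas the other children may exploit $v$'s influence through the residual threshold, but only if they are activated at round $\tau+1$ or later. Everything else is the budget-partition dynamic program already seen in Lemma~\ref{lemma1}, with the single extra counter $k$, whose range $\{0,\dots,d\}$ is what multiplies the running time by an additional factor of $d$.
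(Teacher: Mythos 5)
Your proof is correct and takes essentially the same approach as the paper's: a budget-splitting dynamic program over the children with an extra counter for how many of them are influenced within round $\tau-1$, together with the key constraint that the reduced threshold $t(v_i)-1$ is admissible only when $\tau_i\geq\tau+1$, while the children used to trigger $v$ must be influenced within round $\tau-1$ under their full threshold. The only differences are bookkeeping: you track \emph{exactly} $k$ such children and maximize over $k\geq t$ at the end (and precompute per-child maxima, slightly sharpening the time analysis), whereas the paper's table records ``at least $k$'' with $k\leq t$; both give the claimed $O(d^2\lambda b^2)$ bound.
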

\begin{proof}
Fix any ordering $v_1,v_2,\ldots, v_d$ of the children of the node $v$.\\
We first define the values $\Bmax_{v,\tau}[i,j,k]$, for   $i=1,\ldots, d$, $j=0,\ldots, b$, and $k=0,\ldots, t$.\\
If $i\geq k$, we define $\Bmax_{v,\tau}[i,j,k]$ to be  the maximum number of nodes that can be influenced, within $\lambda$ rounds, in the subtrees $T(v_1),T(v_2),\ldots,T(v_i)$ assuming that
\begin{itemize}
\item $v$ is influenced within round $\tau;$ 
\item at most $j$ nodes among those in $T(v_1),T(v_2),\ldots,T(v_i)$  belong to the target set;
\item  at least $k$  among   $v_1,v_2,\ldots,v_i$, will be influenced within round $\tau-1$.
\end{itemize}
We define $\Bmax_{v,\tau}[i,j,k]= -\infty$ when the above constraints are not satisfiable. For instance, if $i<k$ we have $\Bmax_{v,\tau}[i,j,k]= -\infty$. 
\medskip

By (\ref{eq-case2}) and by the definition of $\Bmax$,   we have    
\begin{equation}\label{eq-bmax}\MIS[v,b,\tau,t]=1+\Bmax_{v,\tau}[d,b,t].\end{equation} 
We can  compute $\Bmax_{v,\tau}[d,b,t]$ by recursively computing the values of $\Bmax_{v,\tau}[i,j,k]$ for each $i=1,2,\ldots,d,$ for each $j=0,1,\ldots,b,$ and for each $k=0,1,\ldots,t,$ as follows.

\def\taum{\delta}

For $i=1$, we  have to assign all the budget $j$ to the first subtree of $v$. Moreover, if $k=1$, then by definition $v_1$ has to be influenced before round $\tau$ and consequently we can not use  threshold $t(v_1)-1$ (which assumes  that  $v$ contributes to the influence of $v_i$). Hence, we have 
\begin{equation}\Bmax_{v,\tau}[1,j,k]= \begin{cases}   
\max_{\tau_1,t_1} \{\MIS[v_1,j,\tau_1,t_1]\}, & \mbox{if } k=0 \\	
\max_{\taum}\{\MIS[v_1,j,\taum,t(v_1)]\}, & \mbox{if } k=1 \\
 -\infty, & \mbox{otherwise,} \end{cases} \label{eqBmax}\end{equation}
\jumpback
\jumpback
where 
\begin{itemize}
\item $\tau_1 \in \{0,\ldots,\lambda,\spec\}$
\item $t_1\in \{ t(v_1), t(v_1)-1\}$
	\item if $t_1=t(v_1)-1$ then $  \tau_1 \geq  \tau+1$ 
\item $\taum\in \{0,\ldots, \tau-1\}$.
\end{itemize}

The third constraint ensures that we can use a reduced threshold on $v_1$ only after the
father $v$ has been influenced.

To show   the correctness of equation (\ref{eqBmax}), one can (easily)
 check that, for $k <2$, any target set solution $S$ that maximizes the value on the 
 left side of the equation is also a feasible solution for the value on the right, and vice versa.
 
For $i>1$, as in the preceding lemma, we consider all possible ways of partitioning the budget  $j$ into two values $a$ and $j-a$. The budget $a$ is assigned to the first $i-1$ subtrees, while the remaining budget $j-a$ is assigned to $T(v_i)$. Moreover, in order to ensure that  at least $k$ children of $v$, among children  $v_1,v_2,\ldots,v_i$, will be influenced before round $\tau$, there are two cases to consider: a) the $k$ children that are influenced before round $\tau$ are among the first $i-1$ children of $v$. In this case $v_i$ can be influenced at any round and can use a reduced threshold; b) only $k-1$ children among nodes  $v_1,v_2,\ldots,v_{i-1}$ are influenced before round $\tau$ and consequently $v_i$ has to be influenced before round $\tau$ and  cannot use a reduced threshold.   Formally, we prove that
\begin{eqnarray}\label{eqBmax2}
\nonumber \Bmax_{v,\tau}&[i,j,k]  {=}\max \Big\{ \max_{\substack{0\leq a\leq j\\ \tau_i, t_i}}  (\Bmax_{v,\tau}[i{-}1,a,k] + \MIS[v_i,j{-}a,\tau_i,t_i]),\\
                                  & \max_{\substack{0\leq a\leq j\\ \taum}} (\Bmax_{v,\tau}[i{-}1,a,k{-}1]+\MIS[v_i,j{-}a,\taum,t(v_i)]) \Big\}
\end{eqnarray}
\jumpback
where 
\begin{itemize}
\item $\tau_i \in \{0,\ldots,\lambda,\spec\}$
\item $t_i\in \{ t(v_i), t(v_i)-1\}$
\item if $t_i=t(v_i)-1$ then $  \tau_i \geq  \tau+1$ 
\item $\taum\in \{0,\ldots, \tau-1\}$.
\end{itemize}

In the following we show the correctness of equation (\ref{eqBmax2}).
First we show that 
\begin{equation*}
\begin{split}
\Bmax_{v,\tau}[i,j,k]\leq&\max \Big\{  \max_{\substack{0\leq a\leq j\\ \tau_i, t_i}} (\Bmax_{v,\tau}[i{-}1,a,k] + \MIS[v_i,j{-}a,\tau_i,t_i]),\\
                                  & \max_{\substack{0\leq a\leq j\\ \taum}} (\Bmax_{v,\tau}[i{-}1,a,k{-}1]+\MIS[v_i,j{-}a,\taum,t(v_i)]) \Big\}
\end{split}
\end{equation*}

\remove{
$$\Bmax_{v,\tau}[i,j,k]\leq \max_{0\leq a \leq j}\left .\begin{cases}  \Bmax_{v,\tau}[i{-}1,a,k] + \max_{\tau_i,t_i} \{\MIS[v_i,j{-}a,\tau_i,t_i]\},\\ \Bmax_{v,\tau}[i{-}1,a,k{-}1] + \max_{\taum}\{\MIS[v_i,j{-}a,\taum,t(v_i)]\}   \end{cases}\right .$$
}

 Let $S\subseteq \bigcup_{z=1}^{i}T(v_z)$ be a feasible target set solution that maximizes
the number of nodes that can be influenced, within $\lambda$ rounds, in the subtrees $T(v_1),T(v_2),\ldots,T(v_i)$ and satisfies the constraints defined in the definition of $\Bmax_{v,\tau}[i,j,k]$. Hence $|S| \leq j$. We can partition $S$ into two sets $S_a$, where $|S_a|\leq a$, and $S_b$ ($|S_b|\leq j-a$) in such a way that $S_a \subseteq \bigcup_{z=1}^{i-1}T(v_z)$ while $S_b \subseteq T(v_i)$. Since $S$ satisfies the constraints defined in the definition of $\Bmax_{v,\tau}[i,j,k]$, we have that, starting with $S$, at least $k$ children of $v$, among children  $v_1,v_2,\ldots,v_i$, will be influenced before round $\tau$. Hence,  starting with $S_a$, at least $k-1$ children of $v$, among children  $v_1,v_2,\ldots,v_{i-1}$, will be influenced before round $\tau$. We distinguish  two cases:
\begin{itemize}
	\item If $S_a$ influences $k-1$ children of $v$, among children  $v_1,v_2,\ldots,v_{i-1}$, before round $\tau$, then we have that $S_b$ must also influence  $v_i$ before round $\tau$.  Hence $S_a$ is a feasible solution for $\Bmax_{v,\tau}[i{-}1,a,k{-}1]$ and $S_b$ is a feasible solution for	\\ $\max_{\taum}\{\MIS[v_i,j{-}a,\taum,t(v_i)]\}$.

\item On the other hand when $S_a$ influences at least $k$ children of $v$, among children  $v_1,v_2,\ldots,v_{i-1}$, before round $\tau$ then $S_a$ is a feasible solution for $\Bmax_{v,\tau}[i{-}1,a,k]$ and $S_b$ is a feasible solution for $\max_{\tau_i,t_i} \{\MIS[v_i,j{-}a,\tau_i,t_i]\}$. 
\end{itemize}
In either case we have that the solution $S$ is also a solution for the right side of the equation. 
%
Perfectly similar reasoning can be used to show that 
\begin{equation*}
\begin{split}
\Bmax_{v,\tau}[i,j,k]\geq&\max \Big\{  \max_{\substack{0\leq a\leq j\\ \tau_i, t_i}} (\Bmax_{v,\tau}[i{-}1,a,k] + \MIS[v_i,j{-}a,\tau_i,t_i]),\\
                                  & \max_{\substack{0\leq a\leq j\\ \taum}} (\Bmax_{v,\tau}[i{-}1,a,k{-}1]+\MIS[v_i,j{-}a,\taum,t(v_i)]) \Big\}
\end{split}
\end{equation*}

\remove{
\jumpback  $$\Bmax_{v,\tau}[i,j,k]\geq \max_{0\leq a \leq j}\left .\begin{cases}  \Bmax_{v,\tau}[i{-}1,a,k] + \max_{\tau_i,t_i} \{\MIS[v_i,j{-}a,\tau_i,t_i]\},\\ \Bmax_{v,\tau}[i{-}1,a,k{-}1] + \max_{\taum}\{\MIS[v_i,j{-}a,\taum,t(v_i)]\}   \end{cases}\right .$$
}
and hence equation (\ref{eqBmax2}) is proved.

The computation of $\Bmax_{v,\tau}$ comprises $O(d^2  b)$ values (recall that $t\leq d+2$) and each one is computed recursively in time $O(\lambda b)$. Hence we are able to compute it, and 
by (\ref{eq-bmax}),  also $\MIS[v,b,\tau,t]$, in time $O(d^2 \lambda b^2 )$.
\end{proof}

\medskip

\noindent \textbf{CASE 3:} $\tau=\spec$.
In this case we only have to consider the original threshold $t(v_i)$ for each child $v_i$ of $v$.
Moreover, we must  consider all the possible ways in which the  budget $b$ can be partitioned among $v$'s children.

\begin{lemma}
It is possible to compute $\MIS[v,b,\spec,t]$  in time $O(d\lambda b^2),$ where $d$ is the number of children of $v$.
\end{lemma}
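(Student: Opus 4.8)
The plan is to compute $\MIS[v,b,\spec,t]$ by a left-to-right dynamic program over the children $v_1,\dots,v_d$ of $v$ that distributes the budget among the subtrees $T(v_i)$, exactly in the spirit of the quantity $\Amax_v$ from the proof of Lemma~\ref{lemma1}. First I would record the two structural consequences of $\tau=\spec$: the node $v$ cannot belong to the target set (otherwise it would be influenced at round $0$, hence within round $\lambda$), and $v$ is never influenced within round $\lambda$, so it never contributes to any of its children; consequently the spreading process inside each subtree $T(v_i)$ is autonomous and must use the \emph{original} threshold $t(v_i)$ — never the residual threshold $t(v_i)-1$, which presupposes a contribution from $v$. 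Therefore $\MIS[v,b,\spec,t]$ equals the maximum, over budget partitions $b_1+\dots+b_d\le b$, of $\sum_{i=1}^{d}\max_{\tau_i\in\{0,\dots,\lambda,\spec\}}\MIS[v_i,b_i,\tau_i,t(v_i)]$, subject to the additional feasibility requirement that $v$ itself does not become influenced within round $\lambda$, i.e.\ that strictly fewer than $t$ of $v_1,\dots,v_d$ are influenced within round $\lambda-1$.

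Next I would set up the recurrence. For $i=1,\dots,d$ and $j=0,\dots,b$, let $\Cmax_v[i,j]$ be the maximum number of nodes that can be influenced within $\lambda$ rounds in $T(v_1),\dots,T(v_i)$, using at most $j$ target-set nodes taken from those subtrees, subject to the requirement that at most $t-1$ of $v_1,\dots,v_i$ are influenced within round $\lambda-1$; then $\MIS[v,b,\spec,t]=\Cmax_v[d,b]$ (with the convention $\MIS[v,b,\spec,t]=-\infty$ when $t=0$, since then $v$ cannot avoid being influenced). As in Lemma~\ref{lemma1}, $\Cmax_v[1,j]$ is obtained by assigning all of $j$ to $T(v_1)$, and for $i>1$ one takes $\Cmax_v[i,j]=\max_{0\le a\le j}\bigl(\Cmax_v[i-1,a]+(\text{best admissible contribution of }T(v_i)\text{ with budget }j-a)\bigr)$; the correctness — that an optimal target set decomposes along the children and, conversely, that any assembly of child-solutions admissible for these subproblems is feasible for the whole — follows by the same exchange argument already used for $\Amax_v$ and $\Bmax_{v,\tau}$.

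The main obstacle is enforcing the ``at most $t-1$ early children'' requirement \emph{without} a counter over the children, which would, as in CASE 2, multiply the number of table entries by $t\le d(v)+1$ and raise the cost to $O(d^2\lambda b^2)$. I would deal with this by a case distinction on $t$. Since $t\le d(v)+1$, the requirement is vacuous as soon as $t$ exceeds the number $d$ of children, and then the plain recurrence with each $\tau_i$ ranging over all of $\{0,\dots,\lambda,\spec\}$ already works. When $t\le d$ — the delicate case — I would, in the ``admissible contribution of $T(v_i)$'', restrict the choices of $\tau_i$ so as to cap at $t-1$ the children permitted to be influenced within round $\lambda-1$; the crux is to show this costs only $O(1)$ extra state, and I would base it on the fact that although $\MIS[v_i,\cdot,\tau_i,t(v_i)]$ is not monotone in $\tau_i$ over the whole range (cf.\ the Remark), it is nondecreasing in $\tau_i$ on $\{1,\dots,\lambda\}$, so that the best value of $T(v_i)$ with $v_i$ influenced within round $\lambda-1$ is exactly $\MIS[v_i,b_i,\lambda-1,t(v_i)]$, while a strict improvement at $\tau_i=\lambda$ already certifies that $v_i$ is not influenced before round $\lambda$. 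Granting this, the running time is immediate: $\Cmax_v$ has $O(db)$ entries, each computed by an $O(b)$-way maximum whose summand is itself a maximum over the $O(\lambda)$ admissible values of $\tau_i$, for a total of $O(d\lambda b^2)$, and then $\MIS[v,b,\spec,t]=\Cmax_v[d,b]$ gives the claimed bound.
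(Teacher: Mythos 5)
Your core dynamic program is exactly the paper's: the same table $\Cmax_v[i,j]$ over the children with budget splitting, original thresholds $t(v_i)$ only, $O(db)$ entries at $O(\lambda b)$ each, and $\MIS[v,b,\spec,t]=\Cmax_v[d,b]$. The gap is precisely the step you yourself call the crux: enforcing ``at most $t-1$ children of $v$ influenced within round $\lambda-1$'' with only $O(1)$ extra state. As sketched, this does not work. A label $\tau_i$ only carries the guarantee that $v_i$ is influenced \emph{within} round $\tau_i$: a child assigned $\tau_i=\lambda$ (or even $\tau_i=\spec$, given how those entries are themselves computed) may in the realized process become influenced much earlier, so restricting the admissible values of $\tau_i$ does not cap the number of children active by round $\lambda-1$. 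Your certificate ``a strict improvement at $\tau_i=\lambda$ shows $v_i$ is not influenced before round $\lambda$'' fails under ties, overlooks $\tau_i=0$ (a targeted child is also early), and in any case which children end up early depends on the budget split, so it cannot be decided child-by-child. Enforcing the cap exactly would need a counter over the children as in CASE~2, i.e.\ a third index $k\leq t$, which costs $O(d^2\lambda b^2)$ --- exactly what you are trying to avoid. (Your monotonicity observation itself is correct: for fixed budget and threshold, $\MIS[v_i,\cdot,\tau_i,t(v_i)]$ is nondecreasing for $\tau_i\in\{1,\ldots,\lambda\}$, since a larger $\tau_i$ only weakens the feasibility constraint; the paper's non-monotonicity remark concerns the special values $0$ and $\spec$.)

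The paper sidesteps the issue entirely: its recurrence for $\Cmax_v$ lets every child range over all $\tau_i\in\{0,\ldots,\lambda,\spec\}$ with its original threshold and imposes no cap at all. This suffices because the $\spec$-entry is only ever used to deny $v$ any role: $v$ is not counted, does not help its children (original thresholds), and is not credited with helping its parent. If the chosen child solutions happen to activate $t$ or more children early, the realized process simply influences $v$ (and possibly more nodes) in addition to those accounted for, so the computed value is still achievable and the accounting at the parent remains conservative; monotonicity of the diffusion process does the rest. So either drop the feasibility cap and argue along these lines, as the paper does, or pay for the CASE~2-style counter; the $O(1)$-state shortcut you propose is not substantiated and, as described, would not be correct.
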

\begin{proof}
Fix any ordering $v_1,v_2,\ldots, v_d$ of the children of the node $v$.\\
For $i=1,\ldots, d$ and $j=0,\ldots, b$, let $\Cmax_v[i,j]$ be the maximum number of nodes that can be influenced, within $\lambda$ rounds, in $T(v_1),T(v_2),\ldots,T(v_i)$ assuming that 
\begin{itemize}
\item $v$ will not be influenced within $\lambda$ rounds and 
\item at most $j$ nodes, among nodes in $T(v_1),T(v_2),\ldots,T(v_i)$,  belong to the target set.
\end{itemize}
By (\ref{eq-case3}) and by the definition of $\Cmax$,   we have  \jumpback    
\begin{equation} \label{eq-cmaxmis}\MIS[v,b,\spec,t]=\Cmax_v[d,b].\end{equation} 
We can compute $\Cmax_v[d,b]$ by  recursively computing the values  $\Cmax_v[i,j]$ for each $i=1,2,\ldots,d$ and for each $j=0,1,\ldots,b,$  as follows.
\\
For $i=1$, we can assign all of the budget to the first subtree of $v$ and we have \jumpback$$\Cmax_v[1,j]= 
\max_{\tau_1} \{\MIS[v_1,j,\tau_1,t(v_1)]\}$$ 
where $\tau_1 \in \{0,\ldots,\lambda,\spec\}.$
\\
For $i>1$,  we consider all possible ways of partitioning the budget  $j$ into two values $a$ and $j-a$, for each $0\leq a \leq j.$ The budget $a$ is assigned to the first $i-1$ subtrees, while the remaining budget $j-a$ is assigned to $T(v_i)$. Hence,  the following holds: \jumpback 
$$   \Cmax_v[i,j]=\max_{0\leq a \leq j} \left \{ \Cmax_v[i-1,a] + \max_{\tau_i} \{\MIS[v_i,j-a,\tau_i,t(v_i)]\} \right \}$$ 
where $\tau_i \in \{0,\ldots,\lambda,\spec\}.$


The computation of $\Cmax_v$ comprises $O(d b)$ values and each one is computed recursively in time $O(\lambda b)$. Hence,
by (\ref{eq-cmaxmis}),   we are able to compute $\MIS[v,b,\spec,t]$ in time $O(d\lambda b^2)$.
\end{proof}

Thanks to the three lemmas above we have that for each node $v \in V,$  for each $b=0,1,\ldots,\beta$, for each $\tau=0,1,\ldots, \lambda, \spec$, and for $t \in \{t(v)-1,t(v)\}$, $\MIS[v,b,\tau,t]$ can be computed recursively in time $O(d(v)^2\lambda\beta^2)$. Hence, the value
$$\max_{\tau \in \{0,1,\ldots,\lambda,\spec\}} \ \MIS[r,\beta,\tau,t(r)]$$ can be computed in time
\jumpback $$\sum_{v \in V} O(d(v)^2\lambda\beta^2) {\times} O(\lambda \beta)=O(\lambda^2\beta^3)\times\sum_{v \in V} O(d(v)^2)=O(\min\{n\Delta^2\lambda^2\beta^3,n^2\lambda^2\beta^3\}),$$
where $\Delta$ is the maximum node degree.
Standard backtracking techniques can be used to compute a target set of cardinality at most $\beta$ that
influences this maximum number of nodes in the same $O(\min\{n\Delta^2\lambda^2\beta^3,n^2\lambda^2\beta^3\})$ time.
This proves Theorem \ref{theorem-tree}.

\newcommand\lA[1]{p_A(#1)}

\newcommand\lS[1]{p_S(#1)}

\section{$(\lambda,\beta)$-Maximally Influencing Set on Paths,   Cycles, and Complete Graphs}	 

The results of Section \ref{sec-trees} obviously include paths. 
However, we are able to significantly improve on the computation time for paths.

Let $P_n=(V,E)$  be a path on $n$ nodes $v_1,v_2,\ldots, v_{n}$, 
and  edges  $(v_i,v_{i+1}$), for  $i= 1,\ldots, n-1$.
Moreover, we denote by $C_n$ the cycle on $n$ nodes that consists  
of the path $P_n$ augmented with the edge $(v_1,v_n)$.
In the following, we assume that $ 1 \leq t(i) \leq 3$, for  
$i= 1,\ldots, n.$ Indeed, paths with $0$-threshold nodes can 
be dealt with by removing up to $\lambda$ $1$-threshold nodes 
on the two sides of each $0$-threshold node. In case we remove 
strictly less than $\lambda$ nodes, we can reduce by $1$ the 
threshold of the first node that is not removed (which must have 
threshold greater than $1$). The path gets split into several subpaths, 
but the construction we provide below still works (up to taking care of boundary conditions).

\begin{theorem}\label{teo-path}
The {$(\lambda, \beta)$-\textsc{Maximally Influencing Set}} problem can be solved in time $O(n\beta\lambda)$ on a path $P_n$.
\end{theorem}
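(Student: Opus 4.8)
The plan is to design a left-to-right dynamic program along the path $P_n = v_1,\dots,v_n$, exploiting the fact that on a path the only neighbors of $v_i$ are $v_{i-1}$ and $v_{i+1}$, so the ``interface'' between the already-processed prefix $v_1,\dots,v_i$ and the rest of the path is completely described by a small amount of information about $v_i$ alone. Concretely, I would scan the nodes $v_1,\dots,v_n$ and maintain a table $D[i,b,\sigma]$ equal to the maximum number of nodes influenced within round $\lambda$ among $v_1,\dots,v_i$, over all target sets of size at most $b$ contained in $\{v_1,\dots,v_i\}$, subject to a ``state'' $\sigma$ recording everything the future (the edge to $v_{i+1}$) needs to know about $v_i$. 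Since $1\le t(i)\le 3$, the relevant facts are: whether $v_i$ is in the target set, the round at which $v_i$ becomes influenced (which matters only up to $\min(\cdot,\lambda)$, or ``never''), and how much of $v_i$'s threshold has already been met by $v_{i-1}$ versus how much must still be contributed by $v_{i+1}$. Because thresholds are at most $3$ and each node has at most two neighbors, $\sigma$ ranges over a constant-size set, so the table has $O(n\beta\lambda)$ entries.

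First I would fix the precise state space: for node $v_i$, record the pair $(\text{in-}S?,\ r_i)$ where $r_i\in\{0,1,\dots,\lambda,\spec\}$ is the round within which $v_i$ gets influenced (with the convention of Definition~1's $\tau$-parameter, i.e.\ $r_i=0$ means $v_i\in S$, $r_i=\spec$ means not influenced within round $\lambda$), together with a flag for whether $v_i$ still ``owes'' its right neighbor one unit of its threshold — equivalently, whether the influence count $v_i$ received from $v_{i-1}$ alone (by the appropriate round) was $t(i)-1$ or already $\ge t(i)$. This is exactly the role played by the residual-threshold trick $t\in\{t(v)-1,t(v)\}$ in the tree algorithm, specialized to degree $\le 2$. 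Then I would write the base case at $v_1$ (only the edge to $v_2$ matters; $v_1$ is influenced iff $v_1\in S$, or $t(1)\le$ the contribution it will get from $v_2$, which is $\le 1$), and the transition from $D[i-1,\cdot,\cdot]$ to $D[i,\cdot,\cdot]$: given the state of $v_{i-1}$ (when and whether it is influenced, hence whether it contributes to $v_i$ by round $r-1$ for each candidate round $r$), and a choice of whether to spend one unit of budget putting $v_i$ in $S$, one determines the consistent states for $v_i$ — in particular the round $r_i$ at which $v_i$ is influenced must satisfy: $v_i\in S$ (then $r_i=0$), or the number of neighbors among $\{v_{i-1},v_{i+1}\}$ influenced by round $r_i-1$ is $\ge t(i)$, where $v_{i-1}$'s contribution is fixed by its state and $v_{i+1}$'s contribution ($0$ or $1$) is encoded in the ``owes'' flag of $v_i$'s state. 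Each transition involves only constant-size case analysis over $\sigma$ and a budget split into $b$ and $b-[v_i\in S]$, so it costs $O(1)$ amortized per $(i,b,\sigma)$ triple after the right bookkeeping, giving total time $O(n\beta\lambda)$. The answer is $\max_{b\le\beta,\sigma} D[n,b,\sigma]$ restricted to states $\sigma$ consistent with $v_n$ having no right neighbor (i.e.\ the ``owes'' flag must be off, since there is no $v_{n+1}$ to supply the missing unit).

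The correctness argument is a routine induction on $i$: one shows that for every feasible target set $S\subseteq\{v_1,\dots,v_i\}$ realizing a given state of $v_i$, its restriction $S\cap\{v_1,\dots,v_{i-1}\}$ is feasible for the corresponding predecessor state and contributes the claimed count, and conversely — this is the same ``$S$ is feasible on the left iff it is feasible on the right, and vice versa'' bookkeeping used in the proof of equation~(\ref{eqBmax2}), but much simpler because there is only one child. I expect the main obstacle to be purely notational: pinning down the state $\sigma$ so that the interaction between the round $v_{i-1}$ is influenced, the round $v_i$ is influenced, and the unit $v_i$ may still receive from $v_{i+1}$ is captured consistently — in particular making sure that the monotone nesting $\A{S,0}\subseteq\A{S,1}\subseteq\cdots$ is respected (a node influenced at round $r$ contributes to its neighbors at all later rounds) and that the ``$\spec$'' / never-influenced case is threaded through correctly, including the handling at the two endpoints. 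Once the state is fixed, the recurrences and the $O(n\beta\lambda)$ bound follow immediately; the boundary bookkeeping for $v_1$ and $v_n$, and the reduction from $0$-threshold nodes described just before the theorem statement, complete the argument.
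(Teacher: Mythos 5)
Your proposal is correct in outline, but it takes a genuinely different route from the paper's proof. The paper's dynamic program never tracks influence rounds explicitly: its table is $MIS[i,b,r,t]$, where $r\le\min\{\lambda,r(i)\}$ is the length of the run of untargeted threshold-$1$ nodes attached to the right of $v_i$ and $t\in\{t(v_i)-1,t(v_i)\}$ is a residual threshold; the recursion either places $v_i$ in the target set and jumps in one step over the whole threshold-$1$ run to its left (adding $r+\ell+1$ influenced nodes, with $\ell=\min\{\lambda,l(i)\}$, and passing a reduced threshold to $v_{i-\ell-1}$ when $\ell<\lambda$), or skips $v_i$ and, when $t=1$, merges it into the run parameter ($r+1$, capped at $\lambda$). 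The latency bound enters only through capping run lengths at $\lambda$, which is why each transition is $O(1)$ and the bound $O(n\beta\lambda)$ falls out immediately. Your approach instead carries the influence round of the interface node $v_i$ (together with an in-$S$ bit and an ``owes one unit to the right'' flag) in the state, i.e.\ it is the tree algorithm's $(\tau,t)$ bookkeeping specialized to degree two; this is more systematic and would generalize more readily (e.g.\ it handles $0$-thresholds and other boundary cases without the run-based preprocessing), but it buys the $O(n\beta\lambda)$ bound less directly. Two points in your sketch need to be pinned down: (i) the compatibility conditions between the round of $v_{i-1}$ and the candidate round of $v_i$ are one-sided inequalities (e.g.\ $r_{i-1}\le r_i-1$ when $v_i$ is justified from the left, $r_{i-1}\ge r_i+1$ when $v_{i-1}$'s flag defers to $v_i$), so without precomputing prefix/suffix maxima of $D[i-1,b,\cdot]$ over the round coordinate the natural implementation costs $O(n\beta\lambda^2)$; the claimed ``$O(1)$ amortized after the right bookkeeping'' is exactly this prefix-max trick and should be stated; and (ii) correctness rests on a certificate argument that any locally consistent assignment of rounds (each non-seed node justified only by neighbors with rounds at most $r_i-1$, deferrals resolved by the flag and forbidden at $v_n$) is realized by the actual diffusion process, and conversely that the true process induces such an assignment --- this rules out circular justification between $v_i$ and $v_{i+1}$ and is the analogue, in your setting, of the feasibility-both-ways argument used for the tree recurrences. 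With those two details made explicit, your argument is a valid alternative proof of the $O(n\beta\lambda)$ bound.
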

%
%
\begin{proof}{} {\bf (Sketch.)}
For $i=1,2, \ldots n,$ let $r(i)$ be the number of consecutive nodes having threshold $1$ on the right of  node $v_i$, that is,  $r(i)$ is the largest integer such that $i+r(i) \leq n$ and $t(v_{i+1})=t(v_{i+2})=\ldots=t(v_{i+r(i)})=1$. 
Analogously we define $l(i)$ as the largest  integer such that $i-l(i)\geq 1$ and $t(v_{i-1})=t(v_{i-2})=\ldots=t(v_{i-l(i)})=1.$

We use $P(i,r,t)$ to denote the subpath of $P$ induced by nodes $v_1,v_2,\ldots,v_{i+r}$, where the threshold of each node $v_j$ with $j \neq i$ is $t(v_j)$, while the threshold of $v_i$ is set to  $ t \in \{t(v_i)-1, t(v_i)\}.$

We define $MIS[i,b,r,t]$ to be  the maximum number of nodes that can be influenced in $P(i,r,t)$ assuming that at most $b$ nodes among $v_1,v_2, \ldots,v_i$ belong to the target set while 
$v_{i+1}, \ldots,v_{i+r}$ do not.  

Noticing that $P(n,0,t(v_n))=P$ and we require that $|S|\leq \beta$, the desired value is $MIS[n,\beta,0,t(v_n)]$.

In order to get $MIS[n,\beta,0,t(v_n)]$, we compute $MIS[i,b,r,t]$ for each $i=0,1, \ldots n,$ for each $b=0,1,\ldots,\beta$, for each $r=0,1,\ldots,\min\{\lambda, r(i)\}$, and for $t \in \{t(v_i)-1, t(v_i)\}$.

Denote by $S(i,b,r,t)$ any target set attaining the value $MIS[i,b,r,t]$.

If $i=0$ OR $b=0$ we set $MIS[i,b,r,t]=0.$

If $i>0$ AND $b>0$. Consider the following quantities 
\begin{eqnarray*}
\ell &=& \min\{\lambda,l(i)\}\\
M_0 &=& 
\begin{cases} 
MIS[i{-}\ell{-}1,b-1,0,t(v_{i-\ell-1})-1] +r+\ell+1& \text{if $\ell<\lambda$}\\
MIS[i{-}\ell{-}1,b-1,0,t(v_{i-\ell-1})] +r+\ell+1 & \text{otherwise }
\end{cases}
\\
M_1&=& 
\begin{cases} 
MIS[i{-}1,b,0,t(v_{i-1})] & \text{if $t>1$}\\
MIS[i{-}1,b,\min\{\lambda,r+1\},t(v_{i-1})] & \text{otherwise.}
\end{cases}  
\end{eqnarray*}
By distinguishing whether  $v_i$ belongs to the target set $S(i,b,r,t)$ or not we are able to prove
that
\begin{equation*}\label{eq-path}
MIS[i,b,r,t]= \max \left\{M_0, M_1\right\}
\end{equation*}
and
$v_i \in S(i,b,r,t)$ if and only if  $MIS[i,b,r,t]= M_0$.
\end{proof}

For cycles, the problem can be solved by simply solving two different problems on a path and taking the minimum.
Indeed, starting with a cycle we can consider any node $v$ such that $t(v)\geq 2$ (if there is no such node, then the problem is trivial).
 If  node $v$ belongs to the target set, we can consider the path obtained by removing all the nodes influenced only by $v$ and then solve the problem on this path with a budget $\beta-1$. On the other hand, if we assume that $v$ does not belong to the target set,  then we simply consider the path obtained by eliminating $v$. 
 Therefore, we obtain the following result.
\begin{theorem}\label{teo-cycle}
The {$(\lambda, \beta)$-\textsc{Maximally Influencing Set}} problem can be solved in time $O(n\beta\lambda)$ on a cycle $C_n$.
\end{theorem}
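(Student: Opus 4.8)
The plan is to reduce the cycle to a constant number of path instances and appeal to Theorem~\ref{teo-path}. First I would apply the same normalization used for paths: a vertex of threshold $0$ is influenced for free at round $0$ and can be removed together with the (at most $\lambda$) consecutive threshold-$1$ vertices on each side of it; since deleting a contiguous arc from a cycle leaves a path, the mere presence of a zero-threshold vertex already reduces the instance to a path instance handled by Theorem~\ref{teo-path} (the degenerate case in which the whole cycle is influenced for free has answer $n$). We may thus assume $1\le t(v_i)\le 3$ for all $i$ (every vertex of $C_n$ has degree $2$, so thresholds can be capped at $3$). If in addition $t(v_i)=1$ for all $i$, the optimum is $0$ when $\beta=0$ and otherwise $\min\{n,\beta(2\lambda+1)\}$, since a vertex is influenced within round $\lambda$ iff it lies within distance $\lambda$ of a seed, each seed reaches at most $2\lambda+1$ vertices, and spacing the seeds out attains the bound. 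Hence from now on fix a vertex $v$ with $t(v)\ge 2$.

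Every target set either contains $v$ or it does not, so it suffices to compute the best solution under each alternative and return the larger value. \textbf{Case $v\notin S$.} As $v$ has exactly two neighbours and $t(v)\ge 2$, it can become influenced only once \emph{both} of its neighbours are; in particular no vertex is ever influenced with help from $v$, and a routine induction on the round number shows that the influenced vertices other than $v$, and the rounds at which they become influenced, are exactly those in the path $P:=C_n-v$ carrying the original thresholds. Moreover $v$ itself is influenced within round $\lambda$ precisely when both endpoints of $P$ are influenced within round $\lambda-1$. Thus the optimum of this case equals the optimum of $(\lambda,\beta)$-MIS on $P$ for the objective ``number of influenced vertices, plus one if both endpoints are influenced within round $\lambda-1$'', which the dynamic program of Theorem~\ref{teo-path} computes with $O(1)$ extra work per table entry, in the same spirit as its residual-threshold bookkeeping.

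\textbf{Case $v\in S$.} By monotonicity of the diffusion, the set $X$ of vertices influenced within round $\lambda$ when $v$ is the only seed --- that is, $v$ together with the maximal run of threshold-$1$ vertices on each side of $v$, each run truncated to length $\lambda$ --- is influenced within round $\lambda$ in every solution containing $v$. Deleting the arc $X$ from $C_n$ leaves a path $P'$ (or nothing, in which case the optimum of this case is $n$, which is feasible since $\beta\ge 1$ here); each endpoint of $P'$ has exactly one neighbour in $X$, and that neighbour's influence round is smaller than $\lambda$ precisely when the threshold-$1$ run on that side has fewer than $\lambda$ vertices, in which case the endpoint's threshold on $P'$ is lowered by one and otherwise left unchanged. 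The optimum of this case is then $|X|+\MIS(P',\beta-1)$, computed by Theorem~\ref{teo-path} on $P'$ with these endpoint thresholds; the path recurrence already supports a lowered threshold at one end and extends verbatim to a lowered threshold at both ends.

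Combining the two cases gives the optimum for $C_n$, and since Theorem~\ref{teo-path} is invoked $O(1)$ times on instances of size at most $n$ after $O(n)$ preprocessing, the total running time is $O(n\beta\lambda)$. The part I expect to be most delicate is the case $v\notin S$: one must verify that $v$ never produces a feedback effect, so that its possible late activation adds exactly $0$ or $1$, and that this contribution is correctly captured by the modified path objective; together with the two minor extensions of the path dynamic program --- a lowered threshold at both endpoints, and the ``$+1$ if both endpoints are active by round $\lambda-1$'' bonus --- this is where the bookkeeping needs care, although all of it follows routinely from the recurrence of Theorem~\ref{teo-path}.
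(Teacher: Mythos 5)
Your proposal is correct and follows essentially the same route as the paper: fix a node $v$ with $t(v)\ge 2$ (the all-small-threshold cases being trivial), split into the cases $v\in S$ and $v\notin S$, and reduce each case to a path instance solved via Theorem~\ref{teo-path}, combining the two answers. You are in fact more explicit than the paper's brief sketch on two points it glosses over, namely the possible extra unit coming from $v$ itself being activated when $v\notin S$ and the endpoint threshold adjustments after deleting the arc influenced by $v$ alone.
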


%
%
%
%
%
%
%
%
%


Since complete graphs are of clique-width at most~$2$, results from \cite{CCGMV13} imply that the
$(\lambda,\beta)$-MIS problem is solvable in polynomial time on complete graphs if $\lambda$ is constant.
Indeed, one can see that for complete graphs the $(\lambda,\beta)$-\textsc{Maximally Influencing Set}  
can be solved in linear time, independently 
of the value of $\lambda$, by using ideas of  \cite{Nichterlein-12}.

If $G$ is a complete graph, we have that for any $S \subseteq V$, and any round $\tau \geq 1$,
it holds that
$$\Active[S, \tau] = \Active[S, \tau-1] \cup \{v \,:\,t(v) \leq |\Active[S, \tau-1]|\}.$$
Since $\Active[S, \tau - 1] \subseteq \Active[S, \tau],$ we have
\begin{equation} \label{eq:clique-dynamics}
\Active[S, \tau] = S \cup \{v \,:\,t(v) \leq |\Active[S, \tau-1]|\}.
\end{equation}
From (\ref{eq:clique-dynamics}), and by using a standard exchanging 
argument, one immediately sees that a set $S$
with largest influence is the one containing the nodes with highest
thresholds. Since $t(v)\in \{0, 1, \ldots, n\}$,
the selection of the $\beta$ nodes with highest threshold 
can be done in linear time. Summarizing, we have the following result.
\remove{
If $G$ is a complete graph, we have that for any $S \subseteq V$, and any round $\tau \geq 1$, the dynamics of the influencing process is given by
$$\Active[S, \tau] = \Active[S, \tau-1] \cup \{v \,:\,t(v) \leq |\Active[S, \tau-1]|\}.$$
Since $\Active[S, \tau - 1] \subseteq \Active[S, \tau],$ it follows that
\begin{equation} \label{eq:clique-dynamics}
\Active[S, \tau] = S \cup \{v \,:\,t(v) \leq |\Active[S, \tau-1]|\}.
\end{equation}
Define for each $i = 1, \dots, n,$ the value
$M_i = |\{v \,:\,t(v) \leq i\}|$ which counts the number of nodes of threshold not larger than $i.$
Based on this,  for each $\lambda \geq 2,$ and any $x \in [0,n]$, we define the values
$M^{(\lambda)}(x) = M_{x+M^{(\lambda-1)}(x)}$, 
with $M^{(1)}(x) = M_x$.

Notice that the values $M_1, \dots, M_n,$ can all be computed together in linear time. It follows that, for any $x, \lambda \in [0,n]$
the value $M^{(\lambda)}(x)$ can be computed in time $n + \lambda.$

The following two lemmas imply that when constructing (optimal) target sets for
the complete graph, we can limit ourselves to considering solutions containing the nodes in decreasing order of their threshold,
since this type of set guarantees the largest set of influenced nodes at any round.

\begin{lemma} \label{lemma:not-fully-active}
Let $S \subseteq V$ be such that for each $v \in S$ and $w \in V \setminus S$ it holds that $t(v) \geq t(w).$ Then, for any $\tau \geq 1,$
if $|\Active[S, \tau]| \neq n$ then $S \cap \{v \,:\,t(v) \leq |\Active[S, \tau-1]|\} = \emptyset.$
\end{lemma}
\begin{proof}
$|\Active[S, \tau]| \neq n$ implies the existence of a node $v$ such that
$v \not \in \Active[S, \tau].$  By (\ref{eq:clique-dynamics}) we have that
$v \not \in S,$ and by hypothesis, $t(v) \leq t(w),$ for any $w \in S.$ Moreover,
by (\ref{eq:clique-dynamics}) we also have that $v \not \in  \Active[S, \tau]$ implies
$t(v) > |\Active[S, \tau-1]|.$
Therefore, we have that $t(w) \geq t(v) > |\Active[S, \tau-1]|$ for any $w \in S,$
from which the claim follows.
\end{proof}

\begin{lemma}
Let $G = (V, E)$ be the complete graph.
Let $S \subseteq V$ be such that for each $v \in S$ and $w \in V \setminus S$ it holds that $t(v) \geq t(w).$
For any $S' \subseteq V$ such that $|S'| = |S|$ and $\tau \geq 0,$ we have that $|\Active[S, \tau]| \geq |\Active[S', \tau]|.$
\end{lemma}
\begin{proof}
The claim is trivially true for $\tau = 0.$ For $\tau > 0,$
we argue by contradiction.
Let $\tau > 1$ be the minimum such that $|\Active[S, \tau]| < |\Active[S', \tau]|.$
Then, it must be the case that $|\Active[S, \tau]| < n,$ and by Lemma \ref{lemma:not-fully-active} this implies
$S \cap \{v \,:\,t(v) \leq |\Active[S, \tau-1]|\} = \emptyset.$

Because of the minimality of $\tau$ we have $|\Active[S', \tau-1]|\}| \leq |\Active[S, \tau-1]|\}|,$ which implies
\begin{equation} \label{eq:max-thr-1}
|\{v \,:\,t(v) \leq |\Active[S', \tau-1]|\}| \leq |\{v \,:\,t(v) \leq |\Active[S, \tau-1]|\}|.
\end{equation}

Then, by using (\ref{eq:clique-dynamics}) we have
\begin{eqnarray*} \label{eq:max-thr-2}
|\Active[S', \tau]| &=& |\{v \,:\,t(v) \leq |\Active[S', \tau{-}1]|\}| + |\{v \in S' \,:\,t(v) > |\Active[S', \tau{-}1]|\}| \\
&\leq& |\{v \,:\,t(v) \leq |\Active[S, \tau{-}1]|\}| + |S'| 
                                                  = |\{v \,:\,t(v) \leq |\Active[S, \tau{-}1]|\}| + |S| \\
&=& |\Active[S, \tau]|.
\end{eqnarray*}
\end{proof}
\jumpback
We can  now  state the  main result of this section.
}

\begin{theorem} \label{theorem:basic}
There exists an  optimal solution $S$
to the $(\lambda, \beta)$-\textsc{Maximally Influencing Set}  problem on a complete graph $G=(V, E),$ consisting of the
$\beta$ nodes of $V$ with highest thresholds, and it can be computed in linear time.
\end{theorem}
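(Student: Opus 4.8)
The plan is to prove Theorem~\ref{theorem:basic} in two parts: first establish that a ``greedy'' target set consisting of the $\beta$ highest-threshold nodes is optimal, and then observe that such a set can be identified in linear time. For the optimality part, the key structural fact is equation~(\ref{eq:clique-dynamics}), which says that in a complete graph the influenced set at round $\tau$ is simply $S$ together with every node whose threshold does not exceed $|\A{S,\tau-1}|$. This makes the entire dynamics depend on $S$ only through the sequence of cardinalities $|\A{S,0}|, |\A{S,1}|, \ldots$, so it suffices to show that the greedy set $S$ dominates any other set $S'$ of the same size in this cardinality sequence, round by round.

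First I would fix the greedy set $S$ (so that $t(v)\ge t(w)$ whenever $v\in S$ and $w\notin S$) and an arbitrary competitor $S'$ with $|S'|=|S|=\beta$, and argue by induction on $\tau$ that $|\A{S,\tau}|\ge |\A{S',\tau}|$. The base case $\tau=0$ is immediate since both sets have size $\beta$. For the inductive step, I would use a monotonicity/exchange lemma of the type stated in the (removed) Lemma~\ref{lemma:not-fully-active}: if $\A{S,\tau}$ is not already all of $V$, then none of the nodes of $S$ has yet been ``activated for free'' by the threshold condition, i.e. $S\cap\{v:t(v)\le |\A{S,\tau-1}|\}=\emptyset$, because every node of $S$ has threshold at least as large as every node outside $S$, and some outside node is still inactive. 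Hence from~(\ref{eq:clique-dynamics}), $|\A{S,\tau}| = |\{v:t(v)\le|\A{S,\tau-1}|\}| + \beta$. For $S'$, on the other hand, $|\A{S',\tau}|$ is at most $|\{v:t(v)\le|\A{S',\tau-1}|\}|$ plus the (at most $\beta$) nodes of $S'$ not counted in that first term, so $|\A{S',\tau}|\le |\{v:t(v)\le|\A{S',\tau-1}|\}| + \beta$. Since $x\mapsto|\{v:t(v)\le x\}|$ is nondecreasing and, by the induction hypothesis, $|\A{S',\tau-1}|\le|\A{S,\tau-1}|$, we get $|\A{S',\tau}|\le|\{v:t(v)\le|\A{S,\tau-1}|\}|+\beta = |\A{S,\tau}|$, closing the induction. (If instead $|\A{S,\tau}|=n$ there is nothing to prove at that round or thereafter, since cardinalities cannot exceed $n$.)

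Applying this with $\tau=\lambda$ shows $|\A{S,\lambda}|\ge|\A{S',\lambda}|$ for every $S'$ of size $\beta$, so the greedy set is optimal; if $\beta>n$ we may take $S=V$ and the statement is trivial. For the complexity claim, since all thresholds lie in $\{0,1,\ldots,n\}$, selecting the $\beta$ largest can be done by a single counting-sort pass over the threshold values (or by a linear-time selection algorithm), giving overall linear time; note we do not even need to simulate the $\lambda$ rounds, only to output $S$.

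The main obstacle is getting the inductive step airtight, specifically the case analysis around whether $\A{S,\tau}$ has already saturated to $V$ and the correct handling of the ``free'' activations of nodes already in $S'$ (the term $|\{v\in S':t(v)>|\A{S',\tau-1}|\}|$ must be bounded by $|S'|$ without double-counting). The exchange argument hinges entirely on the hypothesis that greedy nodes have maximal thresholds, so I would be careful to invoke it only when a node outside $S$ is still uninfluenced, which is exactly what makes Lemma~\ref{lemma:not-fully-active} applicable.
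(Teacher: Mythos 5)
Your proposal is correct and follows essentially the same route as the paper: it relies on the clique dynamics identity (\ref{eq:clique-dynamics}) together with the exchange/monotonicity argument (the ``not fully active'' lemma and the round-by-round cardinality comparison) that the paper uses to justify taking the $\beta$ highest-threshold nodes, and the same counting-sort observation for linear time. The only cosmetic difference is that you run the comparison as a direct induction on $\tau$ rather than a minimal-counterexample contradiction, which is an equivalent formulation.
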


\section{Concluding Remarks}

We  considered the problems of selecting a \emph{bounded} cardinality
subset of people in (classes of)  networks, such that the influence they
spread, in a \emph{fixed}  number of rounds,  is the \emph{highest} among
all subsets of same bounded cardinality.
It is not difficult to see that our techniques can also solve closely related 
problems, in  the same classes of graphs considered in this paper.
 For instance, one could fix a requirement $\alpha$ and ask for
the \emph{minimum} cardinality target set such that after $\lambda$ rounds the number
of influenced people in the network is at least $\alpha$. Or, one could
fix a budget $\beta$ and a requirement $\alpha$, and ask about the \emph{minimum}
number $\lambda$ such that there exists a  target set of cardinality at most $\beta$
that influences at least $\alpha$ people in the network within $\lambda$ rounds
(such a minimum $\lambda$ could be equal to $\infty$).
Therefore, it is likely that the FONY\textsuperscript{\textregistered} Marketing Division
will have additional fun in solving these problems (and similar ones) as well.

%
%


\begin{thebibliography}{}
{
\bibitem{ABW-10}
{ \sc E. Ackerman, O. Ben-Zwi} and {\sc G. Wolfovitz.}
\newblock {Combinatorial model and bounds for target set selection.}
\newblock \emph{Theoretical Computer Science}, Vol. 411,  (2010), 4017-4022.


\bibitem{Alba}
{\sc J. Alba,  J.W. Hutchinson, J.  Lynch.}
\newblock  {Memory and Decision Making. } 
\newblock In: \emph{Handbook of Consumer Behavior}, T.S: Robertson and H. Kassarjian (eds.), (1991).

\bibitem{Ar}
\textsc{S. Aral and  D. Walker}, 
\newblock Identifying Influential and Susceptible Members of Social Networks,
\newblock \emph{Science}, Vol. 337 no. 6092, (2012)
 337-341.

\bibitem{Asch56}
S.~E. Asch.
\newblock Studies of independence and conformity: A minority of one against a
  unanimous majority.
\newblock {\em Psychological Monographs}, 70:{}, 1956.


\bibitem{Baum+}
\textsc{R.F. Baumeister} \emph{et al.} 
\newblock The need to belong: Desire for interpersonal attachments as a fundamental human motivation. 
\newblock \emph{Psychological Bulletin}, 117(3), 497-529 (1995).



\bibitem{BHLM-11}
{\sc O. Ben-Zwi, D. Hermelin, D. Lokshtanov } and {\sc I. Newman.}
\newblock {Treewidth governs the complexity of target set selection.}
\newblock  \emph{Discrete Optimization}, Vol. 8,  (2011), 87--96.




\bibitem{BCNS}
{\sc C. Bazgan, M. Chopin, A. Nichterlein} and {\sc F. Sikora.}
\newblock {Parameterized Approximability of Maximizing the Spread of Influence in Networks.}
\newblock COCOON 2013,
LNCS Vol. 7936,  (2013), 543-554.


\bibitem{BGHHJK}
{\sc J.R.S. Blair, W. Goddard, , S.T. Hedetniemi, S. Horton,  P. Jones } and {\sc  G. Kubicki.}
\newblock On domination and reinforcement numbers in trees.
\newblock \emph{Discrete Mathematics}
 308, (7), (2008),  1165 - 1175
 
 
 \bibitem{Bo+}
 \textsc{R. M. Bond} \emph{et al}.
 \newblock A 61-million-person experiment in social influence and political mobilization.
 \newblock \emph{Nature}, vol.  489, 295 -- 298 (2012)
 
\bibitem{Centeno12}
{\sc C.C.~Centeno, M.C.~Dourado, L.~Draque Penso, D.~Rautenbach } and {\sc  J.L.~Szwarcfiter.}
\newblock {Irreversible conversion of graphs.}
\newblock \emph{Theoretical Computer Science,} 412 (29), (2011), 3693 - 3700.



\bibitem{Chen-09}
{\sc N. Chen.}
\newblock {On the approximability of influence in social networks.}
\newblock \emph{SIAM J. Discrete Math.}, 23, (2009), 1400--1415.


\bibitem{Chen+}
{\sc J. Chen, G. Iver} and {\sc A. Pazgal.}
\newblock {Limited Memory, Categorization and Competition.}
\newblock  \emph{Marketing Science},  29, July/August (2010), 650--670.

\bibitem{CLC}
\textsc{W. Chen, L, V.S. Lakshmanan,} and \textsc{C.  Castillo}. 
\newblock \emph{Information and Influence Propagation in Social Networks}. 
\newblock Morgan \& Claypool, 2013

\bibitem{Chiang}
{\sc C. Y. Chiang} \emph{et al.}
\newblock{The Target Set Selection Problem on Cycle Permutation Graphs, 
Generalized Petersen Graphs and Torus Cordalis.}
\emph{arXiv:1112.1313}, (2011).

\bibitem{Chopin-12}
{\sc M. Chopin, A. Nichterlein, R. Niedermeier} and {\sc  M. Weller.}
\newblock {Constant Thresholds Can Make Target Set Selection Tractable.}
  \newblock             MedAlg 2012, LNCS Vol. 7659,  (2012), 120-133.

\bibitem{Chun}
{\sc C.-Y. Chiang, L.-H. Huang, B.-J. Li, J. Wu} and  {\sc H.-G. Yeh.}
\newblock {Some results on the target set selection problem.}
\newblock \emph{Journal of Combinatorial Optimization,}
Vol. 25 (4), (2013), 702--715.

\bibitem{Chun2}
{\sc C.-Y. Chiang, L.-H. Huang} and {\sc H.-G. Yeh.}
\newblock {Target Set Selection Problem for Honeycomb Networks.}
\newblock \emph{SIAM J. Discrete Math.}, 27(1), (2013) 310-328. 

\bibitem{CCGMV13}
{\sc F. Cicalese, G. Cordasco, L. Gargano, M.  Milani\v{c} and Ugo Vaccaro.}
\newblock {Latency-Bounded Target Set Selection in Social
Networks.}
\newblock \emph{Proceedings of Computability
in Europe 2013 (CiE 2013), 
Lectures Notes in Computer Science} Vol. 7921, (2013), 65-77.

\bibitem{C-OFKR} {\sc A. Coja-Oghlan, U. Feige, M. Krivelevich} and  {\sc D. Reichman.}
\newblock {Contagious sets in expanders.}
\newblock \emph{arXiv:1306.2465.}




\bibitem{EK}
{\sc D.  Easley} and {\sc J. Kleinberg.}
\newblock {Networks, Crowds, and Markets: Reasoning About a Highly Connected World.}
\newblock \emph{Cambridge University Press}, (2010).

\bibitem{FKRRS-2003}
{\sc P. Flocchini, R. Kr\'alovic, P. Ruzicka, A. Roncato} and {\sc N. Santoro.}
\newblock  {On time versus size for monotone dynamic monopolies in regular topologies.}
\newblock \emph{  J. Discrete Algorithms,} Vol. 1, (2003), 129--150.

\bibitem{Ga+}
{\sc L. Gargano, P. Hell, J. Peters} and {\sc U. Vaccaro.}
\newblock {Influence Diffusion in Social Networks
under Time Window Constraints.}
In: \emph{Proc. of 20th International Colloquium on Structural Information and Communication Complexity (Sirocco 2013)},
LNCS vol. 8179, (2013), 141-152.

\bibitem{KKT-03}
\textsc{D. Kempe, J.M. Kleinberg} and {\sc  E. Tardos.}
\newblock {Maximizing the spread of influence through a social network.}
\newblock \emph{Proc. of the ninth ACM SIGKDD}
(2003),
137--146.

\bibitem{KKT-05}
\textsc{D. Kempe, J.M. Kleinberg} and {\sc  E. Tardos.}
\newblock {Influential Nodes in a Diffusion Model for Social Networks.}
\newblock 
ICALP'05, LNCS Vol. 3580,
(2005), 1127--1138.
%
\bibitem{LKLG}
{\sc M. Leppaniemi, H. Karjaluoto, H. Lehto} and {\sc A. Goman.} 
\newblock {Targeting Young Voters in a Political Campaign: 
Empirical Insights into an Interactive Digital Marketing Campaign in the 2007 Finnish General Election.}
\newblock \emph{Journal of Nonprofit \& Public Sector Marketing,}
Vol. 22  (2010), 14--37.

\bibitem{Nichterlein-12}
{\sc A. Nichterlein, R. Niedermeier, J. Uhlmann, and M. Weller}.
\newblock {On tractable cases of  target set selection.}
\newblock \emph{Social Network Analysis and Mining,} (2012).




\bibitem{Peleg-02}
{\sc D.~Peleg.}
\newblock Local majorities, coalitions and monopolies in graphs: a review.
\newblock  \emph{Theoretical Computer Science} 282, (2002), 231--257.
%
%

\bibitem{Re}
{\sc T. V. T. Reddy} and {\sc C. P. Rangan.}
\newblock {Variants of spreading messages.}
\newblock \emph{J. Graph Algorithms Appl.}, 15(5), (2011) 683-699.



\bibitem{S}
\textsc{J.  Surowiecki}. 
\newblock \emph{The Wisdom of Crowds: Why the Many Are Smarter Than the Few and How Collective Wisdom Shapes Business, Economies, Societies and Nations}. 
Doubleday, 2004

\bibitem{T}
{\sc K. Tumulty.}
\newblock {Obama's Viral Marketing Campaign.}
\newblock \emph{TIME Magazine}, July 5, (2007).	



%
%
%
%


\bibitem{Za}
{\sc M. Zaker.}
\newblock {On dynamic monopolies of graphs with general thresholds.}
\newblock \emph{Discrete Mathematics}, 312(6), (2012),  1136-1143.
}
\end{thebibliography}
\end{document}